\let\proof\relax
\def\bbb{\mathbb B}
\def\bbr{\mathbb R}
\def\bbp{\mathbb P}
\def\bbe{\mathbb E}
\def\bft{\mathbf T}
\def\cala{\mathcal A}
\def\calb{\mathcal B}
\def\calm{\mathcal M}
\def\calp{\mathcal P}
\def\cals{\mathcal S}
\def\calt{\mathcal T}
\def\calu{\mathcal U}
\def\calx{\mathcal X}
\def\caly{\mathcal Y}
\def\calz{\mathcal Z}
\def\scra{\mathscr A}
\def\scru{\mathscr U}
\def\scrx{\mathscr X}
\DeclareMathOperator*{\argmax}{arg\,max}
\newcommand{\p}[1]{{\color{purple}{#1}}}
\newcommand{\absval}[1]{\mid {#1} \mid}
\newcommand{\norm}[1]{\left\lVert {#1} \right\rVert}
\newcommand{\rlbrack}[1]{\left [ {#1} \right]}
\newcommand{\rlbrace}[1]{\left \{ {#1}  \right\}}
\newcommand{\rlpar}[1]{\left ( {#1} \right)}
\newcommand{\matg}{\succ}
\newcommand{\matleq}{\preceq}
\newcommand{\mean}[1]{\overline{{#1}}}
\newcommand{\expect}[1]{\bbe \rlbrack{{#1}}}
\newcommand{\untilI}[1]{U_{{#1}}} % until with interval I
\newtheorem{remark}{Remark}
\newtheorem{theorem}{Theorem}
\newtheorem{assump}{Assumption}
\theoremstyle{definition}
\newtheorem{example}{Example}
\newtheorem{problem}{Problem}
\newtheorem{definition}{Definition}
\title{\LARGE \bf
Risk-Bounded Temporal Logic Control \\of Continuous-Time Stochastic Systems}
\author{Sleiman Safaoui, Lars Lindemann, Iman Shames, Tyler H. Summers% <-this % stops a space
\thanks{This work was supported in part by the United States Air Force Office of Scientific Research under Grant FA2386-19-1-4073 and in part by the National Science Foundation under Grant ECCS-2047040.}% <-this % stops a space
\thanks{S. Safaoui and T. Summers are with the School of Engineering at The University of Texas at Dallas, Richardson, TX, USA. E-mail: \{sleiman.safaoui, tyler.summers\}@utdallas.edu.
L. Lindemann is with the School of Engineering at the University of Pennsylvania, Philadelphia, Pennsylvania. Email: larsl@seas.upenn.edu.
I. Shames is with the School of Engineering at The Australian National University, Acton, Australia. E-mail: iman.shames@anu.edu.au.
}
}
\begin{document}

\maketitle
\thispagestyle{empty} % Removes the page number in the first page

\begin{abstract}
Motivated by the recent interest in risk-aware control, we study a continuous-time control synthesis problem to bound the risk that a stochastic linear system violates a given specification. We use risk signal temporal logic as a specification formalism in which distributionally robust risk predicates are considered and equipped with the usual Boolean and temporal operators. Our control approach relies on reformulating these risk predicates as deterministic predicates over mean and covariance states of the system. We then obtain a timed sequence of sets of mean and covariance states from the timed automata representation of the specification. To avoid an explosion in the number of automata states, we propose heuristics to find candidate sequences effectively. To execute and check dynamic feasibility of these sequences, we present a sampled-data control technique based on time discretization and constraint tightening that allows to perform timed transitions  while satisfying the continuous-time constraints. 
\end{abstract}

%%%%%%%%%%%%%%%%%%%%%%%%%%%%%%%%%%%%%%%%%%%%%%%%%%%%%%%
%%%%%%%%%%%%%%%%%%%%%%%%%%%%%%%%%%%%%%%%%%%%%%%%%%%%%%%
%%%%%%%%%%%%%%%%%%%%%%%%%%%%%%%%%%%%%%%%%%%%%%%%%%%%%%%

% ~~~~~~~~~~~~~~~~~~~~~~~~~~
% Introduction
% ~~~~~~~~~~~~~~~~~~~~~~~~~~
\section{Introduction} \label{sec:introduction}
The design of safe control laws for autonomous systems has been studied extensively over the past years. For deterministic systems, the safe control synthesis problem is usually cast as a set invariance problem. Proposed solutions consider control barrier functions \cite{ames2016control}, Hamilton Jacobi reachability analysis \cite{herbert2017fastrack}, or model predictive control \cite{chen1998quasi}. However, when the system is stochastic, e.g., due to uncertainty in the system localization, set invariance has to be interpreted by taking risk into account. Besides safety, the system is subject to performance objectives. As system specification complexity plays a major role in the tractability of the control problem, often simple navigation specifications \cite{dimarogonas2006feedback}, i.e., going from $A$ to $B$ while avoiding obstacles, or regulation and reference tracking problems are studied. This excludes a large class of specifications such as repetitive specifications (always repeating a certain sequence of events), specifications with strict temporal requirements (reaching some state within a specific time interval then reaching another state), and many others. More complex system specifications have recently been considered using spatio-temporal logics \cite{raman2014model,sadraddini2015robust}. In this paper, we hence cast the safe control synthesis problem of stochastic system as a risk-aware control synthesis problem with the goal to upper bound the risk that spatio-temporal logic system specifications are violated.

\textbf{Literature review. } Signal temporal logic (STL) is a real-time temporal logic that allows to impose large classes of specifications \cite{maler2004monitoring}. Importantly, such specifications permit to define robust semantics that provide information as to what extent a specification is satisfied or violated \cite{donze2,fainekos2009robustness}. The deterministic control synthesis problem has been addressed using optimization techniques \cite{raman2014model,sadraddini2015robust,mehdipour2019arithmetic}, machine learning techniques \cite{puranic2021learning,varnai2020robustness,liu2021model}, and automata-based techniques in conjunction with transient control laws \cite{gundana2021event,lindemann2020efficient}. More recently, stochastic control treating the STL specification as a chance constraint have been considered \cite{farahani2018shrinking,sadigh2016safe,jagtap2020formal}. These works consider  specific notions of risk and assumptions on the state distribution such as Gaussian distributions, and they largely study discrete-time systems. On the other hand, risk-aware control for more simple system specification, i.e., not not complex  temporal logic specifications, have been considered in various directions, see e.g., \cite{singh2018framework,chapman2019risk,9147792,schuurmans2020learning}. 

In contrast, in this paper we consider risk-aware control under STL specifications. Most closely related to this paper are our previous works \cite{lindemann2021reactive,safaoui2020control} in which we consider risk-aware control for STL specifications. While the focus in \cite{lindemann2021reactive} is on stochastic environments and reactivity, \cite{safaoui2020control} considers stochastic systems and risk, however in a setting where time is discretized and only for the limited fragment of bounded STL specifications. In this paper, we consider a continuous-time stochastic linear system which, to our knowledge, has not been solved with proper formal guarantees. We restrict our attention to linear systems as the distribution's statistics for a nonlinear system are in general hard to estimate.

\textbf{Contributions. }We continue along the lines of our previous work \cite{lindemann2020efficient} where the continuous-time control synthesis problem for a \emph{deterministic system} under STL specifications is studied. The continuous-time control synthesis problem for a \emph{stochastic system} is more challenging and the state explosion problem can not be addressed as in \cite{lindemann2020efficient}. First, the problem of finding control laws that achieve timed transitions in the mean and covariance states is difficult. This problem becomes even more difficult when the state distribution is not known, as is the case in this paper. Second, the efficient integration of control laws into the automata representation of the specification is non-trivial. Our contributions are:
\begin{itemize}
    \item We present, to the best of our knowledge, the first risk-bounded solution to the continuous-time control synthesis problem of stochastic (non-Gaussian) linear systems under STL specifications.
    \item We propose a sampled-data control technique that performs timed transitions within the space of distributions and guarantees continuous-time constraint satisfaction.
    \item As opposed to existing mixed integer linear programming solutions (presented in the deterministic system literature), our method can handle unbounded STL formulas.
\end{itemize}
% ~~~~~~~~~~~~~~~~~~~~~~~~~~
% Background
% ~~~~~~~~~~~~~~~~~~~~~~~~~~
\section{Background}

Let $\mathbb{R}$, $\mathbb{R}_{\ge 0}$, and $\mathbb{Q}_{\ge 0}$ denote the sets of real numbers, non-negative real numbers, and non-negative rational numbers respectively. 
Let $\otimes$ denote the Kronecker product.
Given a matrix, the $vec(\cdot)$ operator stacks its vectors. 
$diag(\cdot)$ is a diagonal matrix of the arguments.
$I_n$ and $0_n$ are the $n \times n$ identity and zero matrices. 
Let $(\Omega, \mathcal{F}, \mathbb{P})$ be a probability space where $\Omega$ is the sample space, $\mathcal{F}$ is a $\sigma$-algebra of subsets of $\Omega$, and $\mathbb{P}$ is a probability measure on $\mathcal{F}$. Given random vectors $X_1,X_2: \Omega \rightarrow \bbr^n$, the expected value of $X_1$ with respect to $\bbp$ is denoted by $\expect{X_1}$ and their covariance is $Cov(X_1,X_2) = \bbe[X_1 X_2^T] - \expect{X_1}\expect{X_2}$.
We abbreviate positive semi-definite as psd.

\subsection{Real-Time Temporal Logics}

Signal interval temporal logic (SITL) is a specification formalism that allows describing a desired system behavior for deterministic systems. A predicate $\mu: \bbr^n \to \bbb$ is a Boolean-valued function that depends on a  function $\alpha: \bbr^n \to \bbr$, also referred to as the predicate function. For a given $x \in \bbr^n$, the predicate is true $\mu(x) = \top$ if $\alpha(x) \geq 0$ and false $\mu(x) = \bot$ if $\alpha(x) < 0$. Let $M$ be a set of atomic predicates $M := \rlbrace{\mu_1, \dots, \mu_{\absval{M}}}$. For $\mu\in M$, the SITL syntax is given by
\begin{align}
    \phi \; ::= \; \top \; | \; \mu \; | \; \neg \phi \; | \; \phi_1 \wedge \phi_2 \; | \; \phi_1 \untilI{I} \phi_2 \; 
\end{align}
where $\phi$, $\phi_1$, and $\phi_2$ are STL formulas and where $\untilI{I}$ is the until operator with time interval $I\subseteq \mathbb{Q}_{\ge 0}$ that is not a singleton; $\neg$ and $\wedge$ encode negations and conjunctions. Based on these operators, one can further derive the operators: 
$\phi_1 \vee \phi_2:=\neg(\neg\phi_1 \wedge \neg\phi_2)$ (disjunction operator),
$F_I\phi:=\top \untilI{I} \phi$ (eventually operator), and
$G_I\phi:=\neg F_I\neg \phi$ (always operator).

An SITL formula $\phi$ is evaluated over deterministic signals $x:\mathbb{R}_{\ge 0}\to\mathbb{R}^n$, potential trajectories of a deterministic system. When $x$ satisfies the SITL formula $\phi$ at time $t$, we denote this by $(x,t)\models \phi$. The continuous-time STL semantics \cite{donze2} (define when $x$ satisfies $\phi$ at $t$) are inductively defined as:
\begin{itemize}
    \item $(x,t)\models \mu$ iff $\alpha(x(t))\ge 0$
    \item $(x,t)\models \neg \phi$ iff $(x,t)\not\models \phi$
    \item $(x,t)\models \phi_1\wedge\phi_2$ iff $(x,t)\models \phi_1$ and $(x,t)\models \phi_2$
    \item $(x,t)\models \phi_1 \untilI{I} \phi_2$ iff $\exists t'' \in t\oplus I$, $(x,t'')\models \phi_2$ and $\forall t' \in (t,t'')$, $(x,t')\models \phi_1$
\end{itemize}
An SITL formula $\phi$ is satisfiable if $\exists x$ such that $(x,0)\models\phi$. 
Such an SITL formula can be translated into a timed automaton $TST_\phi$ \cite{lindemann2020efficient} (see Appendix \ref{app:SITL} for a brief summary). From a timed automaton $TST_\phi$, one can obtain plans which can be thought of as requirements on how each predicate $\mu_i(x(t))$ in $M$ and hence the signal $x(t)$ has to evolve over time $t$ (see \cite{lindemann2020efficient} for details).  While SITL is defined over deterministic signals (and hence for deterministic systems), in this paper, we are interested in stochastic systems.

\subsection{Continuous Time Stochastic System}
We consider the stochastic linear system described by a stochastic differential equation
\begin{align}
    dX(t) = (AX(t) + Bu(X(t)))dt + dW(t), \quad X(0) = X_0, \label{eq:ct_stoch_lin_sys}
\end{align}
where $A \in \bbr^{n \times n},\ B \in \bbr^{n \times m}$ are the (constant) dynamics and input matrices.
We assume that $(A,B)$ is stabilizable.
$X(t) \in \bbr^n$ is the state, $u:\bbr^n\to \bbr^m$ is a state feedback control law, and $X_0:\Omega \to \bbr^n$ is a random variable describing the unknown initial state. 
$dW(t)$ is differential Brownian motion. The stochastic integral $\int_{0}^{T}dW = W(T)-W(0)$ is a possibly non-Gaussian random variable with zero mean $\mean{W} := \expect{W(t)} = 0 \ \forall t$, and covariance $\Sigma := \bbe[W(t) W(t)^T] = diag(\sigma_1^2, \dots, \sigma_n^2)$ where $\sigma_i^2 := Var(W_i(t)) = Cov(W_i(t), W_i(t))$.
Often, $W(t)$ is assumed to be Gaussian (some continuous-time stochastic systems literature assumes that without explicitly stating it).
However, in this work we \emph{do not} make such assumptions. Instead, to promote robustness to uncertainties in the distribution, we consider a moment based ambiguity set:
\begin{align}
    \calp^W := \{ \bbp^W | & \bbe_{\bbp^W}[W] = \mean{W},  \label{eq:W_ambig_set} \\
    & Cov_{\bbp^W}(W) = \bbe_{\bbp^W}[(W-\mean{W})(W-\mean{W})^T] = \Sigma \} \nonumber
\end{align}
that is, $W(t)$ can belong to any distribution with mean $\mean{W}$ (in this case $\mean{W} = 0_n$) and psd covariance matrix $\Sigma$ both of which are assumed to be known.

In this paper, we consider the feedback control law 
\begin{align}
    u(X(t)) = K(t) X(t) + k(t), \label{eq:feedback_ctrl}
\end{align}
which we motivate in the next section,
where $K(t) \in \bbr^{m \times n}$ is a feedback gain and $k(t) \in \bbr^m$ is an open-loop control signal.
The dynamics in \eqref{eq:ct_stoch_lin_sys} thus become:
\begin{align}\label{eq:feedback-ct_stoch_lin_sys}
    d X(t) = (A+B K(t)) X(t)dt + Bk(t)dt + dW(t). 
\end{align}

\subsection{Mean and Covariance Dynamics}
Consider the dynamics in \eqref{eq:feedback-ct_stoch_lin_sys}. The state mean is denoted by $\mean{X}(t):=\bbe[X(t)]$ and the mean dynamics are given by:
\begingroup
        \makeatletter\def\f@size{10}\check@mathfonts
\begin{align}
    \dot{\mean{X}}(t) &:= \bbe[\dot{X}(t)] = (A+B K(t)) \mean{X}(t) + Bk(t), \ \mean{X}(0) =: x_0 \label{eq:mean_dyn}
\end{align}
\endgroup
The state covariance is denoted by $P(t) := Cov(X(t), X(t))$ and the covariance dynamics are given by 
\begin{align}
    & \dot{P}(t) = (A+BK(t))P(t) + P(t)(A+BK(t))^T + \Sigma, \ P(0) = P_0 \nonumber \\
    & \iff 
    vec(\dot{P}(t)) = (I_n \otimes (A+BK(t)) + \nonumber \\
    & \qquad \qquad \qquad (A+BK(t)) \otimes I_n)vec(P(t)) + vec(\Sigma). \label{eq:covar_dyn}
\end{align}

\begin{remark}
    The choice of feedback control law \eqref{eq:feedback_ctrl} results in having $k$ in the mean dynamics \eqref{eq:mean_dyn} (feedforward term) and $K$ in the covariance dynamics \eqref{eq:covar_dyn} (limit covariance growth). 
\end{remark}

The mean and covariance dynamics are both given by first order ODEs and can be stacked into a single ODE:
\begin{align}
    \dot{\calx}(t) &= 
    \cala(K(t))\calx(t) + \calb\calu(t) \label{eq:aug-dyn}
\end{align}
where
\begin{align*}
    \cala(K) &:= 
    \begin{bmatrix}
        A+BK & 0 \\ 
        0 & (I_n \otimes (A+BK) + (A+BK) \otimes I_n)
    \end{bmatrix} \\
    \calb &:= 
    \begin{bmatrix}
        B & 0 \\
        0 & I_{n\cdot n}
    \end{bmatrix}, \quad 
    \calx(t) := 
    \begin{bmatrix}
        \mean{X}(t) \\ vec(P)
    \end{bmatrix} \\
    \calu(t) &:= 
    \begin{bmatrix}
        k(t) \\ vec(\Sigma)
    \end{bmatrix} \quad \text{($vec(\Sigma)$ is a constant).}
\end{align*}
We chose to include $vec(\Sigma)$ (constant) in $\calu$ to retain the traditional linear system format. We also assume the following.
\begin{assump} \label{assump:dynamics_assumptions}
    We assume that there exist closed sets $\scrx,\scru$ and $\calm \in \bbr$ such that $\norm{\dot{\calx}} \leq \calm \ \forall (\calx, \calu) \in \scrx \times \scru$.
\end{assump}
Since \eqref{eq:aug-dyn} is linear, it is  Lipschitz continuous. This, and Assumption \ref{assump:dynamics_assumptions} will become relevant in \S\ref{sec:timed_trans}. 
Regarding the set $\scrx$, 1) we consider bounds on the mean state (e.g. physical bounds on the system) so that $\mean{X}_{min} \matleq \mean{X} \matleq \mean{X}_{max}$ where $\mean{X}_{min}, \mean{X}_{max} \in \bbr^n$ are known, and 2) we assume $A+BK(t)$ is stable by the choice of $K(t)$, thus since $\Sigma$ is psd, $P$ will have a unique symmetric psd steady state value which is the solution to a Lyapunov equation \cite[Thm 22]{simon2006optimal}; since the covariance dynamics are linear, $vec(P)$ will be bounded $0 \matleq vec(P) \matleq D \in \bbr^{n\cdot n}$ for some $D$. 
As for $\calu$, $vec(\Sigma)$ is a constant and we assume that the open-loop control $k(t)$ is bounded $k_{min} \matleq k \matleq k_{max}$.

% ~~~~~~~~~~~~~~~~~~~~~~~~~~
% Problem Statement
% ~~~~~~~~~~~~~~~~~~~~~~~~~~
\section{Problem Statement}

In the stochastic framework introduced above, the satisfiability problem of  an SITL formula $\phi$ is ill-posed, i.e., whether or not a stochastic linear system $X(t)$ satisfies the formula $\phi$. In fact, an atomic predicate $\mu(X(t))$ becomes a random variable. We hence redefine the atomic predicates as risk predicates. Intuitively, the truth value of a risk predicate $\mu^{Ri}(X)$ is true if the risk of violating the predicate is small. 

The risk of violation is obtained by using a risk measure, i.e. a function that maps a random variable to a real number. Let $C$ denote all measurable functions from the sample space $\Omega$ to $\mathbb{R}^n$, i.e., all random variables. Then, a risk measure is defined as $\rho: C \rightarrow \bbr$. We use the distributionally robust value at risk (DR-VaR) which is a coherent risk metric defined as $\inf_{\mathbb{P} \in \calp}\mathbb{P}[-\alpha(X) \leq 0] \geq 1-\eta$ for some risk threshold $\eta\in(0,1)$. The DR-VaR satisfies certain desirable axioms \cite{zymler2013worst}, see \cite[\S II]{safaoui2020control} for more details on risk measures.

Formally, we define the risk predicate as:
\begin{align}\label{eq:risk_predicate}
    \mu^{Ri}(X)&:=\begin{cases}
    \top & \text{if } \rho(-\alpha(X)) \le \eta \\
    \bot &\text{otherwise }
    \end{cases}
\end{align} 
The risk SITL (RiSITL) syntax is defined as
\begin{align}\label{eq:RiSITL}
    \phi \; ::= \; \top \; | \; \mu^{Ri} \; | \; \neg \phi \; | \; \phi_1 \wedge \phi_2 \; | \; \phi_1 \untilI{I} \phi_2. \; 
\end{align}
where $\mu^{Ri}\in M^{Ri}$ for the risk predicates $M^{Ri}=\{\mu_1^{Ri},\hdots,\mu_{|M|}^{Ri}\}$, while the other operators have the same meaning as in SITL. 

The semantics of RiSITL are different in how the risk predicates are evaluated. Instead of $(x,t)\models \mu$ iff $\alpha(x(t))\geq 0$ in SITL, we have $(X,t)\models \mu^{Ri}$ iff $\rho(-\alpha(X)) \le \eta$, while the other operators follow as in the SITL semantics presented earlier. 
$(X,t)\models \phi$ indicates that the stochastic linear system with dynamics $\dot{X}$ satisfies the RiSITL formula $\phi$ at time $t$.
We are now ready to state the formal problem definition.
\begin{problem}\label{prob:1}
    Given the system \eqref{eq:feedback-ct_stoch_lin_sys} and an RiSITL $\phi$ per \eqref{eq:RiSITL}, find the control variables $K(t)$ and $k(t)$ so that $(X,0)\models \phi$.
\end{problem}

% ~~~~~~~~~~~~~~~~~~~~~~~~~~
% Risk-Bounded Control
% ~~~~~~~~~~~~~~~~~~~~~~~~~~
\section{Risk-Bounded Temporal Logic Control}

Our solution to Problem \ref{prob:1} consists of a reformulation of the risk predicates (\S\ref{sec:AP_reform}), an optimization-based controller (\S\ref{sec:timed_trans}) for timed transitions (see Definition \ref{def:timed_transition} later), and the decomposition of the RiSITL specification into a sequence of timed transitions (\S\ref{sec:planning} - \S\ref{sec:risk_based_ctrl_special}). 
In particular, we generate a candidate sequence of timed automaton transitions that we feed into our optimization-based controller to check for dynamic feasibility of this candidate sequence. We note that our solution is sufficient, i.e.,  sound but not \emph{complete}.

\subsection{Atomic Predicate Reformulation} \label{sec:AP_reform}
Consider now linear predicate functions $\alpha(X) = a^T X + b$ where $a \in \bbr^n, \ b \in \bbr$. The following reformulation holds 
\begin{align}
    \rho(-\alpha(X)) \leq \eta 
    \Leftrightarrow & \rho(-(a^TX+b)) \leq \eta \nonumber\\
    \Leftrightarrow & \inf_{\mathbb{P} \in \calp}\mathbb{P}[-a^T X -b \leq 0] \geq 1-\eta  \nonumber\\
    \Leftrightarrow & a^T \mean{X} + b - \underbrace{\sqrt{\frac{1-\eta}{\eta}}}_{:=H} \norm{P^{1/2}a}_2 \geq 0 \label{eq:tightened_const_drvar}
\end{align}
where the last step follows from \cite[Thm 3.1]{calafiore2006distributionally}. 
The risk predicate \eqref{eq:risk_predicate} is now a \emph{deterministic} risk-tightened predicate:
\begin{align}\label{eq:reform_risk_predicate}
    \mu^{Ri}(X)&:=\begin{cases}
    \top & \text{if } a^T \mean{X} + b - H \norm{P^{1/2}a}_2 \geq 0 \\
    \bot &\text{otherwise. }
    \end{cases}
\end{align}
\begin{remark} \label{rmk:risk_tightened_halfspace}
    One way to interpret \eqref{eq:reform_risk_predicate} is to view $\mu^{Ri}(X)$ as a time varying halfspace in the mean dynamics $\mean{X}$. This follows as $P(t)$ is specified apriori when $K(t)$ is chosen in advance. Intuitively, the halfspaces will be tightened so that predicate functions $\alpha(X)$ specifying ``goal regions'' shrink and ``obstacle regions'' expand with higher uncertainty.
\end{remark}

\subsection{Control of Timed Transitions}
\label{sec:timed_trans}
Consider now two polytopes $S_1$ and $S_2$ that are subsets of $\mathbb{R}^{n+n\cdot n}$ and that are either connected or intersecting. We define the timed transition problem as follows.
\begin{definition}[Timed Transition]\label{def:timed_transition}
    Given polytopes $S_1$ and $S_2$ and a transition time $T>0$, then a feedback gain matrix $K(t)$ and a control law $\calu(t)$ achieve a timed transition from $\calx(0)\in S_1$ into $S_2$ at time $T$ if the following is satisfied:
    \begin{subequations}\label{eq:prob_timed_trans}
        \begin{align}
            & \dot{\calx}(t) = \cala(K(t))\calx(t) + \calb\calu(t) \quad &\forall t \in [0,T] \label{eq:prob_timed_trans_a}\\
            & \calu(t) \in \scru, \quad &\forall t \in [0,T] \label{eq:prob_timed_trans_b} \\
            & \calx(t) \in S_1, \quad &\forall t \in [0,T) \label{eq:prob_timed_trans_c} \\
            & \calx(T) \in S_2.\label{eq:prob_timed_trans_d}&
        \end{align}
    \end{subequations}
\end{definition}
A timed transition under a feedback gain matrix $K(t)$ and a control law $\calu(t)$ hence occurs when the mean and the covariance  dynamics as well as the inputs bounds  $\scru$ are respected, and the system state $\calx(t)$ transitions from the set $S_1$, in which the system starts, into the set $S_2$ at time $T$. 

In \S\ref{sec:planning}, the sets $S_1$ and $S_2$ will encode conjunctions of predicates $\mu^{Ri}(X)$ as per \eqref{eq:reform_risk_predicate}. For instance, we may have $S_1:=\{(\mean{X},vec(P))\in\mathbb{R}^{n+n\cdot n} \mid a^T \mean{X} + b - H \norm{P^{1/2}a}_2 \geq 0 \}$ for a single predicate function. It is clear that the polytopes are convex in $\mean{X}$. For example, $S_1$ may encode a room and $S_2$ may encode a corridor next to the room. For a robot with initial state in $S_1$, a timed transition here would require the robot to stay in the room and transition into the corridor exactly at time $T$. Towards finding a feedback gain matrix $K(t)$ and a control law $\calu(t)$ we propose the following architecture.

\subsubsection{Finding $K(t)$} \label{sec:finding_K}
To efficiently integrate our control law with the timed automata representation of $\phi$ (presented in the next section), we propose to first select the closed-loop feedback gain $K(t)$ for a timed transition. Doing so will allow us to determine the covariance matrix $P(t)$ at all times. Without loss of generality, we consider a constant gain $K$ over the entire timed-transition period and require $K$ to satisfy three requirements (related to Assumption \ref{assump:dynamics_assumptions}):
\begin{itemize}
    \item $K$ must be stabilizing, i.e. $A+BK$ must be stable
    \item $K$ must keep the dynamics $\dot{\calx}$ bounded by $\calm$ 
    \item $K$ must keep the covariance bounded by $D$.
\end{itemize}
Additionally, $K$ should be chosen to keep $\calm, D$ small enough to avoid the problem becoming infeasible when the atomic predicates \eqref{eq:reform_risk_predicate} are tightened.
We note that this choice of $K$ may not exist even if a solution to \eqref{eq:prob_timed_trans} exists. This is one factor that makes our solution methodology only sufficient.

\subsubsection{Finding $\calu(t)$} \label{sec:finding_k}
Once $K$ is determined, the covariance matrix $P(t)$ becomes fully defined for all times $t\ge 0$. We now find a control law $k(t)$, which determines $\calu(t)$, to achieve the timed transition as in \eqref{eq:prob_timed_trans} by solving the optimization problem where open-loop control $k(t)$ is minimized:
\begin{subequations}\label{eq:prob_timed_trans_}
\begin{align}
    \min_{k(t)} \quad & \int_0^T k(t)^T R k(t) \\
    \text{s.t.} \quad &\;\eqref{eq:prob_timed_trans_a}-\eqref{eq:prob_timed_trans_d}.
\end{align}
\end{subequations}
where $R$ is psd. Note that \eqref{eq:prob_timed_trans_} is convex in $k(t)$, but infinite dimensional due to the continuous time formulation resulting in an infinite number of decision variables and constraints. Solving this finite horizon, continuous-time, constrained, open-loop control problem exactly may be possible with Pontryagin's optimality principle, but is difficult in general. However, to facilitate a sampled-data implementation, we solve the above optimization problem by discretizing time as in \cite{fontes2018guaranteed} while tightening the constraints to be able to satisfy the original constraints in continuous time.

Our Assumption \ref{assump:dynamics_assumptions} is similar to \cite[Asm 1]{fontes2018guaranteed}.
We note that $\calm$ can be found by solving the following problem:
\begin{align*}
    \calm = \argmax \quad & \norm{\dot{\calx}} \\
    \text{s.t.} \quad & \calx \in \scrx, \quad \calu, \in \scru.
\end{align*}

Let us next define the signed distance function $dist_S: \bbr^{n+n\cdot n} \rightarrow \bbr$ similar to \cite[(5)]{fontes2018guaranteed} as:
$dist_S(\calx) := \min_{\caly \in S} \norm{\calx- \caly} - \inf_{\calz \in S^C} \norm{\calx - \calz}$
i.e., if $\calx \in S$, then the distance is the negative distance to the boundary of $\calx$ while if $\calx \not\in S$ the distance is the positive distance to the boundary. Hence, constraints of the form $\calx(t) \in S$ are equivalent to $dist_S(\calx(t)) \leq 0$.
For simplicity, we opt for a uniform discretization of the time interval $[0,T]$ as $\bft := \rlbrace{\calt_0, \dots \calt_N}$  such that $\calt_{0} = 0$, $\calt_{N} = T$, and $\delta \calt = \frac{T}{N}$ is the time step. Then, using \cite{fontes2018guaranteed} we can guarantee continuous-time satisfaction of the constraints in \eqref{eq:prob_timed_trans_a}-\eqref{eq:prob_timed_trans_d} by tightening them further. 
With $K(t) = K$ for all $t \in [0,T]$ selected as discussed before and assuming $k(t)$ is a zero-order hold control law, so that we have $k[\calt]$ for $\calt \in \bft$, we propose the following problem to find a controller that satisfies \eqref{eq:prob_timed_trans} in continuous-time:
\begin{subequations}\label{eq:timed_transition_reformulation}
\begin{align}
    \min_k \quad & \sum_{i \in \{0, \dots N\}} k[\calt_{i}]^T R k[\calt_{i}] \\
    \text{s.t.} \quad & \dot{\calx}(t) = \cala(K)\calx(t) + \calb\calu(t) \quad \forall t \in [0, T] \label{eq:ct_dynamics_reform} \\
    & \calu(\calt) \in \scru, \quad \calt \in \bft\\
    & dist_{S_1}(\calx(\calt)) \leq -\calm \cdot \delta \calt, \quad \calt \in \bft \label{eq:eps_in_S1} \\
    & \calx(T) \in S_2. \label{eq:S2_satisfaction}
\end{align}
\end{subequations}

\begin{theorem} \label{thm:1}
    Given $K$ and $\calm$ as in \S\ref{sec:finding_K}. Let $k[\calt]$ be a solution to \eqref{eq:timed_transition_reformulation}. 
    Then the constraints of the timed transition problem \eqref{eq:prob_timed_trans_} are satisfied,
    i.e. a timed transition from $S_1$ into $S_2$ as per Definition \ref{def:timed_transition} can be achieved under $K$ and $k[\calt]$.
\end{theorem}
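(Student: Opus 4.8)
The plan is to verify the four constraints \eqref{eq:prob_timed_trans_a}--\eqref{eq:prob_timed_trans_d} one at a time, observing that two of them are inherited verbatim from \eqref{eq:timed_transition_reformulation} and that the real content lies in recovering the continuous-time state constraint \eqref{eq:prob_timed_trans_c} from its tightened, sampled counterpart \eqref{eq:eps_in_S1}. First, \eqref{eq:prob_timed_trans_a} holds by construction: the trajectory $\calx(t)$ associated with a solution of \eqref{eq:timed_transition_reformulation} is generated by the same continuous-time ODE \eqref{eq:ct_dynamics_reform} with the selected constant gain $K$, so the augmented dynamics are satisfied on all of $[0,T]$. The terminal constraint \eqref{eq:prob_timed_trans_d} is literally \eqref{eq:S2_satisfaction}.

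For the input constraint \eqref{eq:prob_timed_trans_b} I would exploit the zero-order-hold structure. Since $K$ is constant and $k(t)$ is piecewise constant with $k(t)=k[\calt_i]$ on each interval $[\calt_i,\calt_{i+1})$, and since $vec(\Sigma)$ is a fixed constant, the augmented input $\calu(t)$ is piecewise constant and equals $\calu(\calt_i)$ on that interval. Because \eqref{eq:timed_transition_reformulation} enforces $\calu(\calt)\in\scru$ at every grid point $\calt\in\bft$ and $\calu(t)$ only ever takes these sampled values, we conclude $\calu(t)\in\scru$ for all $t\in[0,T]$.

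The crux is \eqref{eq:prob_timed_trans_c}. Fix $t\in[0,T)$ and let $\calt_i$ be the grid point with $t\in[\calt_i,\calt_{i+1})$. The idea is to bound the displacement over one sampling interval by the velocity bound $\calm$ and observe that the tightening in \eqref{eq:eps_in_S1} reserves exactly this much margin. Assuming the trajectory stays in $\scrx$ on $[\calt_i,t]$, Assumption \ref{assump:dynamics_assumptions} gives $\norm{\dot{\calx}(\tau)}\le\calm$, hence $\norm{\calx(t)-\calx(\calt_i)}\le\int_{\calt_i}^t\norm{\dot{\calx}(\tau)}\,d\tau\le\calm(t-\calt_i)\le\calm\,\delta\calt$. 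Using that the signed distance $dist_{S_1}$ is $1$-Lipschitz and combining with \eqref{eq:eps_in_S1},
\[ dist_{S_1}(\calx(t))\le dist_{S_1}(\calx(\calt_i))+\norm{\calx(t)-\calx(\calt_i)}\le-\calm\,\delta\calt+\calm\,\delta\calt=0, \]
and by the stated equivalence $dist_{S_1}(\calx(t))\le 0\Leftrightarrow\calx(t)\in S_1$ this yields $\calx(t)\in S_1$ for all $t\in[0,T)$.

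The main obstacle is the implicit circularity in the previous step: the bound $\norm{\dot{\calx}}\le\calm$ holds only while $\calx\in\scrx$, yet I used it precisely to conclude $\calx(t)\in S_1\subseteq\scrx$. I would close this gap with a continuity/bootstrapping argument on each interval. At the left endpoint $dist_{S_1}(\calx(\calt_i))\le-\calm\,\delta\calt<0$ places $\calx(\calt_i)$ strictly inside $S_1\subseteq\scrx$; letting $t^\star$ be the supremum of times in $[\calt_i,\calt_{i+1}]$ up to which the trajectory remains in $\scrx$, the displacement estimate is valid on $[\calt_i,t^\star)$ and forces $\calx(s)\in S_1\subseteq\scrx$ there, so by continuity $\scrx$ cannot be exited before $\calt_{i+1}$, i.e.\ $t^\star=\calt_{i+1}$. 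Iterating over the finitely many intervals establishes \eqref{eq:prob_timed_trans_c} on all of $[0,T)$, and together with \eqref{eq:prob_timed_trans_a}, \eqref{eq:prob_timed_trans_b}, \eqref{eq:prob_timed_trans_d} completes the verification. Beyond this, the only facts I rely on are the $1$-Lipschitzness of the signed distance and the displacement-versus-velocity estimate, both standard and matching the sampled-data framework of \cite{fontes2018guaranteed}.
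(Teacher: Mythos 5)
Your verification is correct, and it is a faithful, self-contained reconstruction of the result the paper simply cites: the paper's entire proof is ``apply \cite[Theorem 1]{fontes2018guaranteed} to the augmented dynamics \eqref{eq:aug-dyn},'' with no further detail. The substance of your argument --- the $1$-Lipschitz property of the signed distance, the inter-sample displacement bound $\norm{\calx(t)-\calx(\calt_i)}\le\calm\,\delta\calt$ that exactly cancels the tightening margin in \eqref{eq:eps_in_S1}, and the zero-order-hold observation for the input constraint --- is precisely the mechanism inside the cited theorem, so mathematically you are on the same route; what your write-up buys is that the reader need not unpack the external reference, and it makes explicit two points the paper glosses over: that \eqref{eq:prob_timed_trans_a} and \eqref{eq:prob_timed_trans_d} transfer verbatim, and that the velocity bound of Assumption \ref{assump:dynamics_assumptions} is only conditional on $\calx\in\scrx$, which you correctly repair with the bootstrapping argument (this is also an implicit hypothesis in \cite{fontes2018guaranteed}). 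The one assumption you use that the paper never states explicitly is $S_1\subseteq\scrx$; it is implicit in the paper's construction of $\scrx$ from the physical and covariance bounds, but it is worth flagging as a standing hypothesis rather than something that follows from \eqref{eq:timed_transition_reformulation} alone.
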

\proof{The proof follows by applying \cite[Theorem 1]{fontes2018guaranteed} with the augmented dynamics \eqref{eq:aug-dyn}.}

\begin{remark}
    The dynamics \eqref{eq:ct_dynamics_reform} are deterministic linear dynamics in $\calx$ with a closed form solution at the sampled times given by: $\calx(\calt) = e^{\cala(K)\calt}\calx(0) + \int_{0}^{\calt}(e^{\cala(K)(\calt-t)}Bk(t))dt$ where the interval can be split into several integrals because $k(t) = k[\calt_i]$ for $t \in [\calt_i, \calt_{i+1})$ \cite{simon2006optimal}.
\end{remark}

\begin{figure}[!htb]
    \centering 
    \includegraphics[width=0.45\textwidth, trim={3cm 2cm 3cm 4cm},clip]{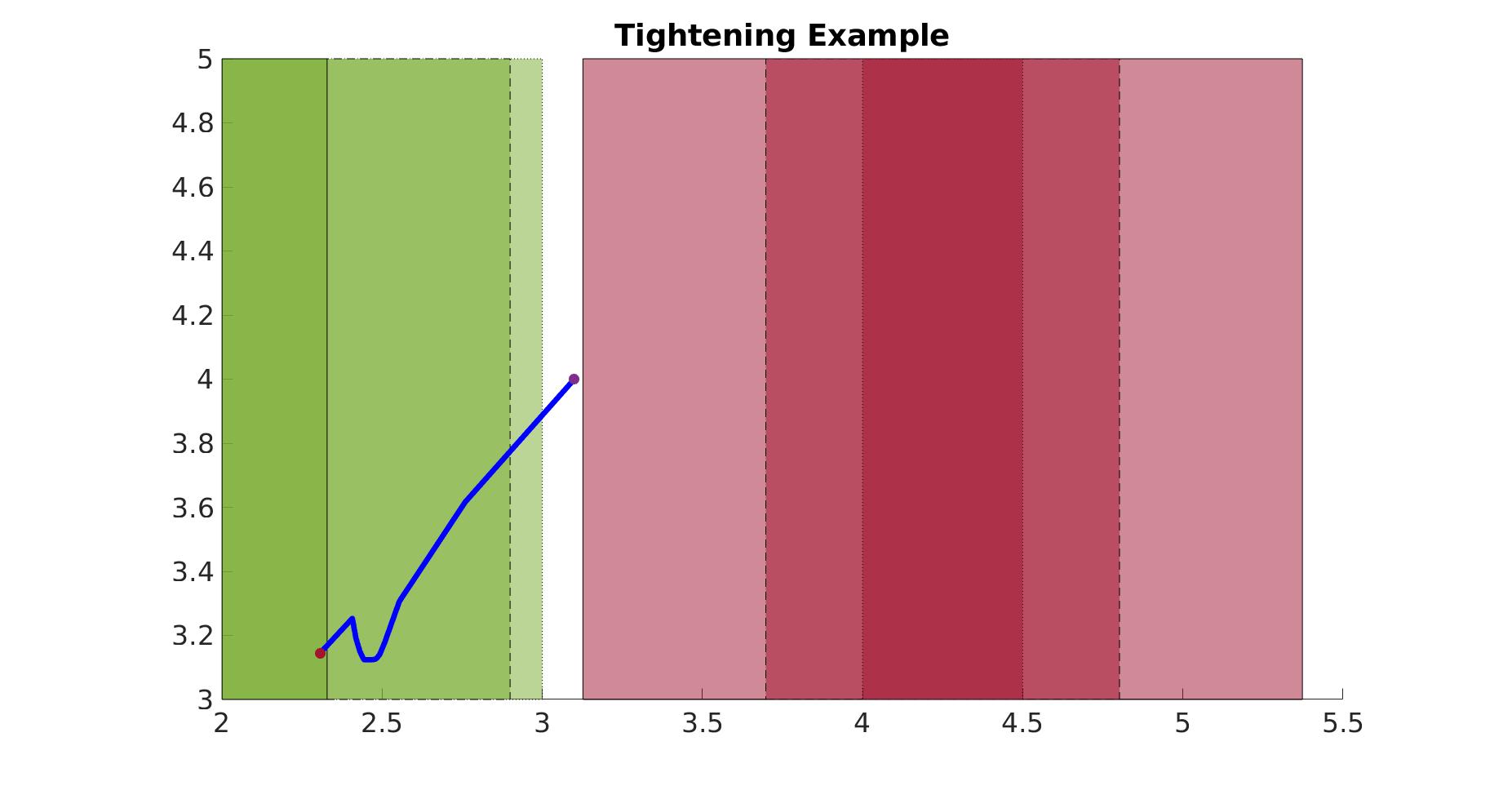}
    \caption{Constraints: not tightened (dotted), DR-tightening (dashed), and DR- and DT-tightening (solid) for a goal region (left, green) and obstacle region (right, red). A timed transition (blue) moves the robot into the goal.
    }
    \label{fig:tightening_example}
    \vspace{-0.3cm}
\end{figure}

\begin{example}
    To demonstrate the different tightening procedures described, consider three atomic predicate functions in 2D 
    $\alpha_1 := [-1, 0]X + 3 \geq 0$,
    $\alpha_2 := [-1, 0]X + 4 \geq 0$, and 
    $\alpha_3 := [1, 0]X -4.5 \geq 0$.
    The first defines a goal region (reach) and the other two define an obstacle region (avoid).
    We use a risk bound $\eta = 0.5$ for $\mu_1^{Ri}(X)$ and $\eta = 0.1$ for $\mu_2^{Ri}(X)$ and $\mu_3^{Ri}(X)$ which are the risk-tightened predicates as in \eqref{eq:reform_risk_predicate}.
    For single integrator dynamics with $K(t) = diag(-5, -5) \ \forall t$ and $\delta \calt = 0.01$, we have $\calm \cdot \delta \calt = 0.5712$.
    With $\Sigma = diag(0.1, 0.1)$, $P(0) = 0_2$, and $T=2$, we get $P(2)= diag(0.0101, 0.0101)$. 
    In Fig. \ref{fig:tightening_example}, we plot the (not tightened) constraints, the risk-tightened (DR-tightened) constraints at $t=2$ and the risk and discrete-time-tightened (DR- and DT-tightened) constraints at $t=2$ with dotted, dashed, and solid edges respectively.
    Notice how the goal ``shrinks'' while the obstacle ``expands'' as $P(t)$ increases.
    The timed transition (blue) is the solution to \eqref{eq:timed_transition_reformulation} that moves the robot into the goal (from the purple to the red point). 
    The discrete-time points stay between the DR- and DT-tightened predicates of the goal and obstacle for all $t \in \{\calt_0, \dots, \calt_{N-1}\}$, thus satisfying \eqref{eq:eps_in_S1}. At $t=T$, the agent satisfies the DR-tightened predicates \eqref{eq:S2_satisfaction}.
\end{example}

\subsection{Risk-Based Automaton} \label{sec:planning}

As mentioned before, avoiding the state explosion problem for stochastic systems is challenging and can not be addressed as in \cite{lindemann2020efficient}. This is particularly the case as the efficient integration of control laws for stochastic systems into the timed-automata representation of the specification is non-trivial.
In our proposed problem solution, we first translate the RiSITL specification $\phi$ into a timed automaton $TST_\phi$ similarly to \cite{lindemann2020efficient}, but with risk predicates $\mu^{Ri}$ instead of predicates $\mu$. The procedure closely follows Appendix \ref{app:SITL}, but we will next briefly describe the difference for RiSITL and refer to the appendix for more intuition. We abstract $\phi(M^{Ri})$ into a MITL specification $\varphi(P)$ (see Appendix \ref{app:MITL} for an introduction to MITL) where $P$ is a set of propositions that replaces the set of risk predicates $M^{Ri}$. The MITL formula $\varphi$ is then translated into a timed automaton $TST_\varphi$ according to \cite{ferrere2019real}. 
Each state $s$ and transition $\delta$ in $TST_\varphi$ now encode intersections of constraints of the form
$\{(\mean{X},vec(P))\in\mathbb{R}^{n+n\cdot n}\mid a^T \mean{X} + b - H \norm{P^{1/2}a}_2 \geq 0 \}$ which we denote by $\lambda(s)$ and $\lambda(\delta)$. We then perform the following operations on $TST_\varphi$ to obtain $TST_\phi$:
\begin{enumerate}
    \item[[$\overline{O1}$]] 
    Remove any $s$ in $TST_\varphi$ if $\not\exists(\mean{X},P)$, $(\mean{X},P) \models \lambda(s)$.
    \item[[$\overline{O2}$]] 
    Remove each transition $\delta$ in $TST_\varphi$ if $\not\exists(\mean{X},P)$ such that $(\mean{X},P) \models \lambda(\delta)$.
\end{enumerate}
Operations $\overline{O1}$ and $\overline{O2}$ are taken without the consideration of the system dynamics \eqref{eq:aug-dyn}. With respect to Remark \ref{rmk:risk_tightened_halfspace}, we view $P(t)$ as apriori given when $K(t)$ is chosen in advance so that the constraints $\lambda(s)$ and $\lambda(\delta)$ can be seen as time-varying in the mean state. We can hence, instead, perform operations $\overline{O1}$ and $\overline{O2}$ for a fixed 
$P$ and obtain a tightened automaton
denoted by 
$TST_\phi^\text{tight}$. Some choices of $P$ are:
\begin{itemize}
    \item $P = 0_n$: $[\overline{O1}]$, $[\overline{O2}]$ become $[O1]$, $[O2]$ from \cite{lindemann2020efficient}
    \item $P \in \{P_{max,i} \mid \forall i \in \{1, \dots |M|\} \}$ such that $P_{max, i} = \arg \sup_P \norm{P^{1/2}a_i}_2$ where $0 \matleq vec(P) \matleq D$: 
    Here, every predicate \eqref{eq:reform_risk_predicate} is tightened by the maximum amount of its $\norm{P^{1/2}a}_2$ term (we call this \emph{maximum tightening}).
\end{itemize}
In these cases, [$\overline{O1}$] and [$\overline{O2}$] can be constructed as simple feasibility problems as described in \cite{bemporad1999control}.

\begin{figure}[!tb]
    \centering 
    \includegraphics[width=0.45\textwidth]{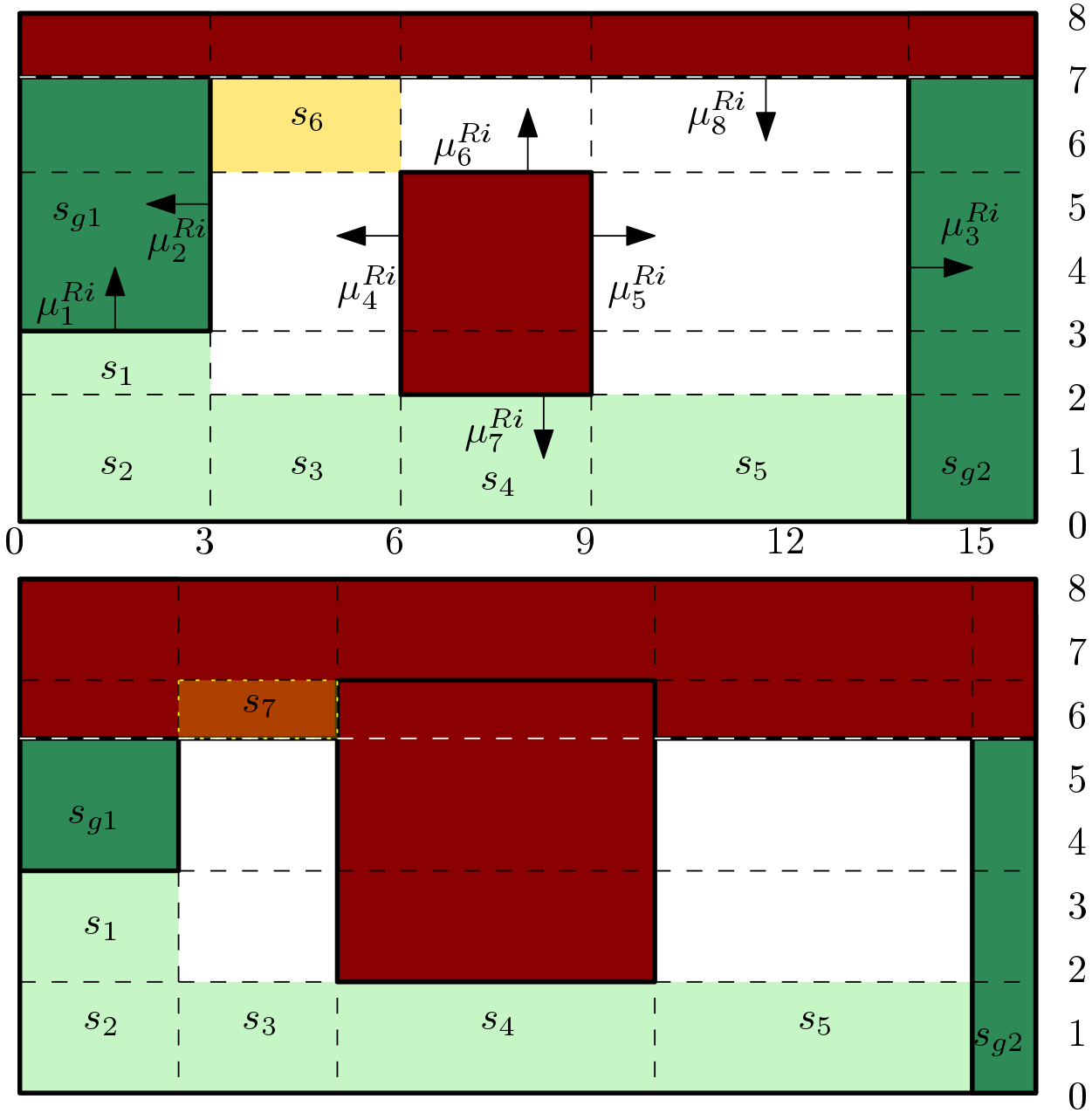}
    \caption{Automaton states changing as $P$ changes}
    \label{fig:automaton_states}
    \vspace{-0.3cm}
\end{figure}

\begin{example}\label{ex:automaton_states}
    Consider the illustrative example in Fig. \ref{fig:automaton_states}. 
    Ignoring the environment bounds, the goal and obstacle regions are defined with eight atomic predicates $\mu^{Ri}_1, \dots, \mu^{Ri}_8$. 
    The atomic predicates are indicated with dashed lines and the halfspace where the predicate is satisfied is indicated by an arrow (top figure).
    There are two dark green goal regions: 1) left $g_1 := \mu^{Ri}_1 \wedge \mu^{Ri}_2$ and 2) right $g_2:=\mu^{Ri}_3$. Their atomic predicates point inwards.
    There are two red obstacle regions: 1) middle $o_1:=\neg \mu^{Ri}_4 \wedge \neg \mu^{Ri}_5 \wedge \neg \mu^{Ri}_6 \wedge \neg \mu^{Ri}_7$ and 2) top $o_2:=\neg \mu^{Ri}_8$. Their atomic predicates point outwards.
    The top figure is for $P(0)=0_2$. The bottom is for some stead state $P(\infty) \matg P(0)$ (its value is irrelevant here).
    Some examples of automaton states are:
    $s_1 := \neg \mu^{Ri}_1 \wedge \mu^{Ri}_2 \wedge \neg \mu^{Ri}_3 \wedge \mu^{Ri}_4 \wedge \neg \mu^{Ri}_5 \wedge \neg \mu^{Ri}_6 \wedge \neg \mu^{Ri}_7 \wedge \mu^{Ri}_8$ and 
    $s_7 := \mu^{Ri}_1 \wedge \neg \mu^{Ri}_2 \wedge \neg \mu^{Ri}_3 \wedge \mu^{Ri}_4 \wedge \neg \mu^{Ri}_5 \wedge \mu^{Ri}_6 \wedge \neg \mu^{Ri}_7 \wedge \neg \mu^{Ri}_8$.
    Notice that with $P(0)$ (top), $s_6$ is non-empty, but with $P(\infty)$ (bottom) it is empty and removed by $[\overline{O1}]$. 
    The reverse happens for $s_7$.
    The states $s_1, s_2, s_3, s_4, s_5$ are non-empty and remain connected for all tightening values $P(0) \matleq P(t) \matleq P(\infty)$.
\end{example}

\subsection{Risk-Based Control: General Case} \label{sec:risk_based_ctrl_general}
Towards an efficient solution, we propose to find candidate sequences of timed transitions from $TST_\phi^\text{tight}$ and post-hoc check for dynamic feasibility by means of the control technique derived in \S\ref{sec:timed_trans}.  The tightened automaton $TST_\phi^\text{tight}$ helps us to guide the search process to find a feasible solution and to decrease the search space.

One approach is to consider the maximum tightening case described earlier. This is the most conservative approach and results in the most robust solution.
[$\overline{O1}$] and [$\overline{O2}$] are applied to $TST_\phi$ with every atomic predicate maximally tightened (e.g. Fig. \ref{fig:automaton_states} (bottom)). The resulting automaton is $TST_\phi^\text{tight}$.

From the tightened automaton $TST_\phi^\text{tight}$, we find a sequence of timed automaton transitions using graph search techniques (see \cite{lindemann2020efficient,alur1996benefits} for details). These automaton transitions are defined by alternating transitions between automaton states (in zero time) and transitions within the same automaton state (in finite time). This returns a sequence of the form, 
$(s_0, 0) \xrightarrow[]{\delta_0} 
(s_1, 0) \xrightarrow[]{\tau_1}
(s_1, \tau_1) \xrightarrow[]{\delta_2}
(s_2, \tau_1) \xrightarrow[]{\tau_2}
(s_2, \tau_1+\tau_2) \xrightarrow[]{\delta_3}
(s_3, \tau_1+\tau_2)$.
$(s_0, 0) \xrightarrow[]{\delta_0} (s_1, 0)$ is an example of a transition between automaton states,
and $(s_1, 0) \xrightarrow[]{\tau_1} (s_1, \tau_1)$ is a transition within the same state.
The former transition occurs instantaneously; it is the exact moment the switch happens from one automaton state to another (e.g. the exact moment a robot reaches the boundary between two rooms). 
The latter transition occurs in the same automaton state while the dynamics evolve in time (e.g. the robot moves in one room to reach the boundary to the next room).
To simplify notation we use: 
$(s_1, 0) \xrightarrow[]{\tau_1, \delta_2} (s_2, \tau_1) :=
(s_1, 0) \xrightarrow[]{\tau_1}
(s_1, \tau_1) \xrightarrow[]{\delta_2}
(s_2, \tau_1)$.

The timed transitions of Definition \ref{def:timed_transition} accomplish $(s_i, \tau_i) \xrightarrow[]{\tau, \delta} (s_j, \tau_i+\tau)$.
For example, consider $(s_1, 0) \xrightarrow[]{\tau_1, \delta_2} (s_2, \tau_1)$.
The automaton state $s_1$ corresponds to a convex set of mean states that satisfy its input label $\lambda(s_1)$, i.e. $S_1:=\lambda(s_1)$. This and $\tau_1$ being the transition time $T$ are handled by \eqref{eq:eps_in_S1}. Similarly, $s_2$ corresponds to the mean states of $S_2:=\lambda(s_2)$ (hence \eqref{eq:S2_satisfaction}).

\begin{algorithm}
    \SetAlgoLined
    \SetKwInOut{Input}{input}
    \SetKwInOut{Output}{output}
    \Input{Candidate automaton transition sequence}
    \Output{$\calu(t)$ or Failure}
    Choose $K(t) = K$ for this transition\; \label{line:find_K}
    $s = s_0$, $t = 0$\; \label{line:automaton_start}
    \For{$\tau, \delta$ in automaton transition sequence such that $(s, t) \xrightarrow[]{\tau, \delta} (s', t+\tau)$ }{
        solved, $k(\calt)$ = solve\_timed\_transition($s$, $\tau$, $\delta$)\; \label{line:find_k}
        \uIf{not solved}{
            break\;
        }
        Store $k(t)$\;
        $s = s_{next}$, $t = t + \tau$\;
    }
    \caption{Risk-Based Timed Transition Control}  \label{algo:algo}
\end{algorithm}

Then, for a timed automaton transitions sequence from $TST_\phi^\text{tight}$, we use Algorithm \ref{algo:algo} to check if it is dynamically feasible. 
For every $(s_i, \tau_i) \xrightarrow[]{\tau, \delta} (s_j, \tau_i+\tau)$, we use \S\ref{sec:finding_K} and \S\ref{sec:finding_k} to find $K$ and $k(t)$ (lines \ref{line:find_K}, \ref{line:find_k}). If we fail to find them, we stop and try a different sequence of timed automaton transitions. If all transitions succeed, then we have found $u(t)$ and Problem \ref{prob:1} is solved.
Note that this heuristic approach is \emph{only sufficient} to finding a solution.

\begin{theorem}
    Consider the automaton $TST_\phi^\text{tight}$ obtained by performing operations $[\overline{O1}]$ and $[\overline{O2}]$ for some fixed matrix $P$, e.g., the maximal tightening. If a sequence of timed automaton transitions $(s_i, \tau_i) \xrightarrow[]{\tau, \delta} (s_j, \tau_i+\tau)$ can be found such that the corresponding timed transitions (Definition \ref{def:timed_transition}) can be achieved by $K$ and $k(t)$ as per \eqref{eq:timed_transition_reformulation}, then the linear stochastic system $X(t)$ satisfies the RiSITL formula $\phi$, i.e. $(X, 0) \models \phi$ and Problem \ref{prob:1} is solved.
\end{theorem}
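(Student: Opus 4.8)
The plan is to establish soundness by chaining three facts: the correctness of the timed-automaton translation, the observation that tightening only deletes automaton elements, and the continuous-time guarantee of Theorem~\ref{thm:1}. First I would invoke the soundness of the construction of $TST_\phi$, inherited from the MITL-to-timed-automaton translation of \cite{ferrere2019real} together with the reformulation of \cite{lindemann2020efficient}: any continuous-time trajectory whose risk-predicate truth values follow an accepting run of $TST_\phi$---that is, whose augmented state $\calx(t)=(\mean{X}(t),vec(P(t)))$ lies in the state label set $\lambda(s_i)$ throughout each dwell interval and crosses into $\lambda(s_j)$ across the transition label $\lambda(\delta)$ at each switching instant---satisfies $(X,0)\models\phi$. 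This reduces the theorem to exhibiting such a trajectory.

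Next I would observe that $TST_\phi^\text{tight}$ is a sub-automaton of $TST_\phi$: operations $[\overline{O1}]$ and $[\overline{O2}]$ only remove states and transitions (for the fixed $P$), leaving the labels $\lambda(s)$ and the transition relation of the surviving elements unchanged. Consequently the candidate sequence $(s_i,\tau_i)\xrightarrow[]{\tau,\delta}(s_j,\tau_i+\tau)$ found in $TST_\phi^\text{tight}$ is, at the symbolic level, an accepting run of the full automaton $TST_\phi$. The particular choice of fixed $P$ (e.g. maximal tightening) therefore plays no role in soundness; it only affects which runs survive the pruning, and hence only completeness. So the argument goes through for any fixed $P$.

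It remains to realize this symbolic run as an actual continuous-time trajectory carrying the correct labels, which is where the control result enters. For each timed transition in the sequence I would set $S_1:=\lambda(s_i)$, $S_2:=\lambda(s_j)$ and transition time $T:=\tau$; by hypothesis $K$ and $k[\calt]$ solving \eqref{eq:timed_transition_reformulation} exist, so Theorem~\ref{thm:1} yields $\calx(t)\in S_1$ for all $t\in[0,T)$ via \eqref{eq:eps_in_S1} and $\calx(T)\in S_2$ via \eqref{eq:S2_satisfaction}, in continuous time rather than merely at sample points. Since the choice of $K$ fixes $P(t)$ a priori, the $vec(P)$-coordinate of $\calx(t)$ is exactly the true covariance, so membership $\calx(t)\in\lambda(s_i)$ is precisely the statement that each reformulated risk predicate \eqref{eq:reform_risk_predicate}, evaluated at the genuine pair $(\mean{X}(t),P(t))$, takes the truth value prescribed by $s_i$. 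Concatenating the transitions over $[0,\sum_i\tau_i]$ then produces a single trajectory realizing the run with correct state labels, the guard $\lambda(\delta)$ at each switch being met by $\calx(\tau)\in\lambda(s_j)$ together with continuity from $\lambda(s_i)$. Invoking the first step concludes $(X,0)\models\phi$ and solves Problem~\ref{prob:1}.

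The main obstacle I anticipate is the first step: making precise that realizing the state and transition labels along a run of $TST_\phi$ is equivalent to RiSITL satisfaction over continuous dense time. The delicate points are the until-operator semantics over the open interval $(t,t'')$ and the matching of Theorem~\ref{thm:1}'s guarantee---$\calx(t)\in S_1$ on the half-open $[0,T)$ with $\calx(T)\in S_2$---to the strict versus non-strict containment the automaton expects at switching instants. I would handle this by appealing to the dwell/guard semantics of the timed-transition encoding in \cite{lindemann2020efficient} (summarised in the appendices) rather than reproving the translation, verifying only that our half-open continuous-time guarantee is consistent with its switching convention.
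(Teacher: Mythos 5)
Your proposal is correct and follows essentially the same route as the paper's proof: the paper likewise argues that the existence of the timed automaton transition sequence guarantees satisfiability of $\phi$ (citing \cite[Lemma 2]{lindemann2020efficient}, which packages the translation-soundness and sub-automaton observations you spell out), and then applies Theorem~\ref{thm:1} to each transition to obtain continuous-time dynamic feasibility before concatenating. Your version is more explicit about the sub-automaton relationship and the open/half-open interval bookkeeping at switching instants, but the underlying argument is the same.
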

\begin{proof}
    Ignoring the dynamics, the existence of a sequence of timed automaton transitions ensures that the formula $\phi$ is satisfiable \cite[Lemma 2]{lindemann2020efficient}. From Theorem \ref{thm:1}, if $K$ and $k(t)$ are found for an automaton transition, then the resulting trajectory is dynamically feasible and satisfies the continuous-time risk-based constraints. Applying this to every automaton transition completes the proof.
\end{proof}

\begin{remark} \label{rem:satisfaction_of_almost_all_timesteps}
    In practice, when solving \eqref{eq:timed_transition_reformulation} repeatedly in Algorithm \ref{algo:algo}, the DT-tightening may sometimes render \eqref{eq:timed_transition_reformulation} infeasible.
    A simple workaround, which works well for small $\delta \calt$, is to remove this DT-tightening term $\calm \cdot \delta \calt$ from \eqref{eq:eps_in_S1} for a few initial and final sampled times (e.g. $\calt_0, \calt_1, \calt_2$ and $\calt_{N-2}, \calt_{N-1}, \calt_N$).
\end{remark}

\subsection{Risk-Based Control: Special Case} \label{sec:risk_based_ctrl_special}
We consider a special case which allows for a less-conservative solution.
While the method can be generalized to $\bbr^n$, we will discuss the $\bbr^2$ case only. 
Consider $a \in \{[0\ \pm1]^T, [\pm1 \ 0]^T \}$. Thus, an automaton state $s$ is a 2D rectangular area.
We thus have $a^T P(t) a \in \{ \pm P(t)_{(1,1)}, \pm P(t)_{(2,2)} \}$ (i.e. $\pm$ the first or second diagonal elements of $P(t)$) and hence $\norm{P^{1/2}a}_2 = \sqrt{a^T P a} \in \{ \sqrt{P_{(1,1)}}, \sqrt{P_{(2,2)}} \}$. 
This makes it easier to describe the conditions on $P(t)$ for which an automaton state $s$ is non-empty (i.e. for which there exists $(\mean{X}, P)$ such that $(\mean{X}, P) \models \lambda(s)$). 
Example \ref{ex:simple_example} illustrates these conditions.

\begin{figure}[!htb]
    \centering
    \includegraphics[width=0.3\textwidth]{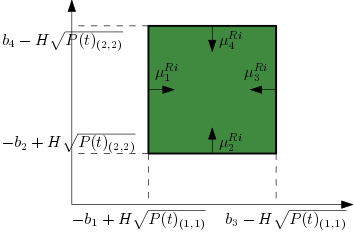}
    \caption{Special case $P$ conditions example}
    \label{fig:simple_example}
    \vspace{-0.5cm}
\end{figure}
\begin{example} \label{ex:simple_example}
Consider $\mu^{Ri}_1 \wedge \mu^{Ri}_2 \wedge \mu^{Ri}_3 \wedge \mu^{Ri}_4$ in Fig. \ref{fig:simple_example}.
For the rectangular set to be non-empty, the distances between parallel edges must be positive (note $P_{(1,1)}, P_{(2,2) \geq 0}$, $P$ is psd):
\begin{itemize}
    \item $-b_1 + H \sqrt{P_{(1,1)}} \leq b_3 - H \sqrt{P_{(1,1)}}$
    $\iff \rlpar{\frac{b_1+b_3}{2}}^2 \geq P_{(1,1)}$
    \item $-b_2 + H \sqrt{P_{(2,2)}} \leq b_4 - H \sqrt{P_{(2,2)}}$
    $\iff \rlpar{\frac{b_2+b_4}{2}}^2 \geq P_{(2,2)}$.
\end{itemize}
\end{example}

We now introduce operation [$\overline{O3}$] to augment the states of the automaton $TST_{\phi}$ with conditions on $P(t)$. 
\begin{enumerate}
    \item[[$\overline{O3}$]] For all $s$ in $TST_{\phi}$, add the two conditions on $P(t)_{(1,1)}$ and $P(t)_{(2,2)}$ (a la Example \ref{ex:simple_example}) for which $s$ is non-empty.
\end{enumerate}
After [$\overline{O3}$], the resulting $TST_{\phi}$ has the same states but with extra guards on the states to guide the planning search. 
[$\overline{O3}$] can be applied to $TST_{\phi}$ after [$O1$], [$O2$]. Then, we can follow the same procedure described in \S\ref{sec:risk_based_ctrl_general}.

\section{Numerical example}
\begin{figure}
    \centering 
    \includegraphics[width=0.48\textwidth, trim={8cm 2cm 7cm 4cm},clip]{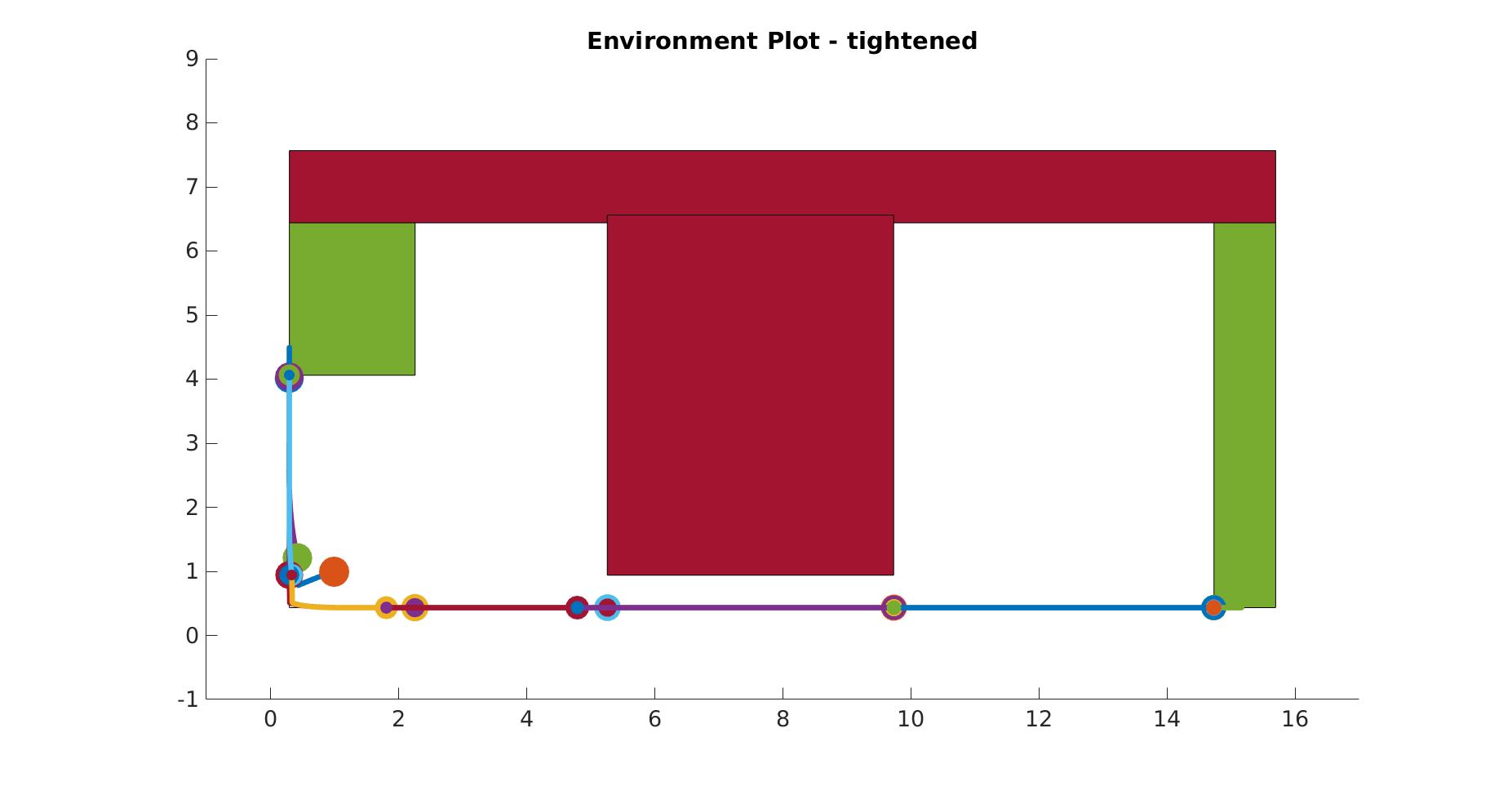} \\
    \includegraphics[width=0.48\textwidth, trim={8cm 2cm 7cm 4cm},clip]{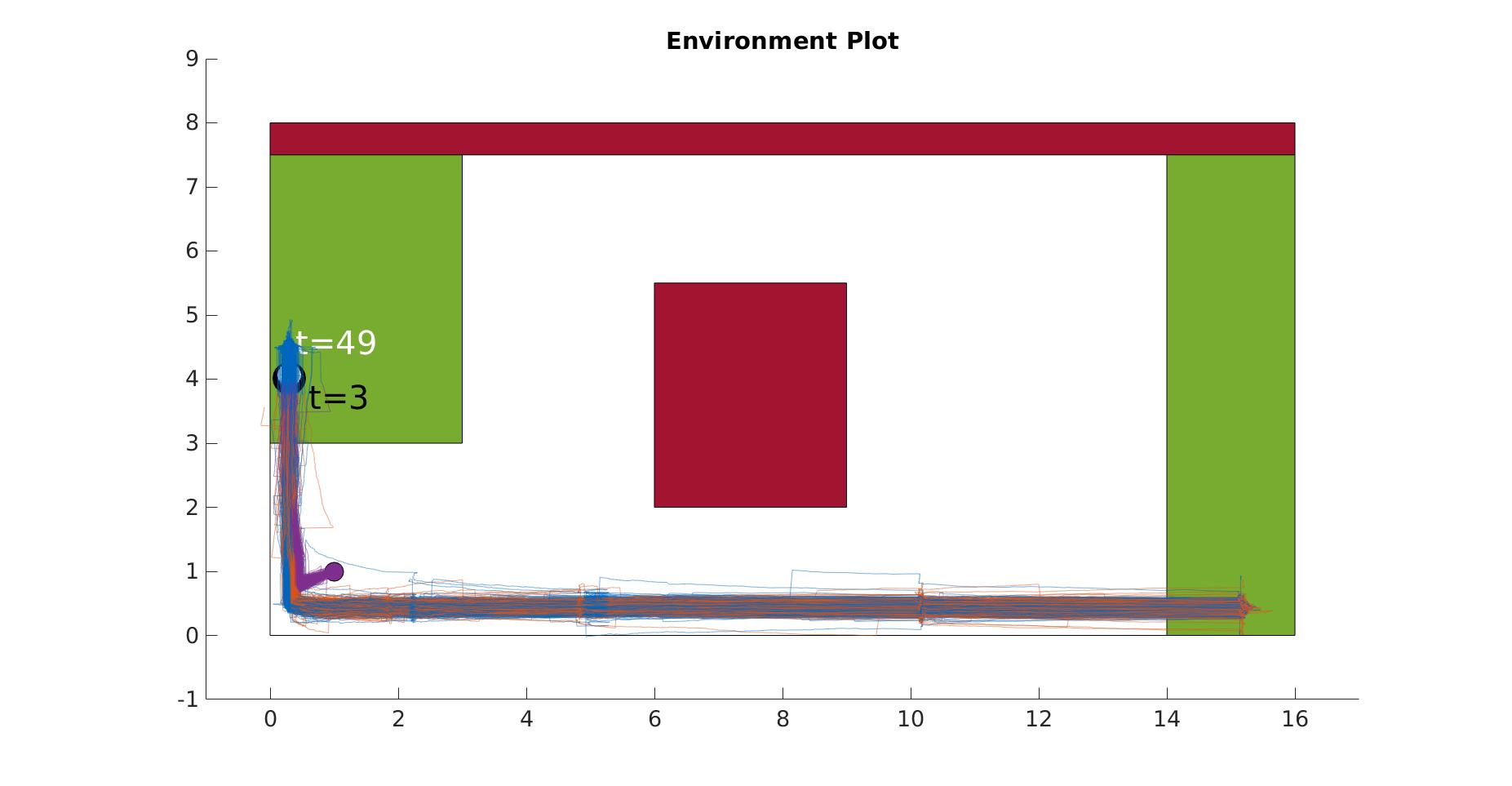}
    \caption{(Top) Timed transitions of a solution for the considered example. Circles mark the start and end of transition; their radius decreases as time increases. (Bottom) 10000 Monte Carlo rollouts for the given solution. Trajectories: (purple) from the initial condition to left goal, (blue) from the left goal the to right goal, (orange) from the right goal to the left goal.}
    \label{fig:sol_tightening}
    \vspace{-0.3cm}
\end{figure}

Consider the environment of Figure \ref{fig:automaton_states} (top) and a robot with dynamics $A = diag(0.07, 0.1)$, $B = diag(1,1)$. 
We use $k_{max} = -k_{min} = 30$, $\delta \calt = 0.01$, $\Sigma=diag(0.1, 0.1)$, $K = diag(-0.9, -0.5)$ for all transitions, and risk bound $\eta=0.1$ for all constraints except the environment bounds where we use $\eta_{env}=0.4$. 
For the specification, consider the repetitive task
$\phi = G_{[0, 3+23 \omega]}(\neg o_1 \wedge \neg o_2) \wedge F_{[0,3]} g_1 \wedge G_{[3,3+23 \omega]}(F_{[0,11]} g_2 \wedge F_{[11,23]}g_1)$ where $\omega$ is the number of repetitions of going from $g_1$ to $g_2$ and back 
($g_1, g_2, o_1, o_2$ were defined in Example \ref{ex:automaton_states}).

Using the approach in \S\ref{sec:risk_based_ctrl_special}, with the robot starting at $x_0 = [1,\ 1]$ with $P(0) = 0_2$ we get the sequence of timed automaton transitions:
$ 
(s_2, 0) \xrightarrow[]{\tau_1=1, \delta_1}
(s_1, 1) \xrightarrow[]{\tau_2=2, \delta_2}
(s_{g1}, 3) [\xrightarrow[]{\tau_3=1, \delta_3}
(s_1, 4) \xrightarrow[]{\tau_4=2, \delta_4}
(s_2, 6) \xrightarrow[]{\tau_5=2, \delta_5}
(s_3, 8) \xrightarrow[]{\tau_6=2, \delta_6}
(s_4, 10) \xrightarrow[]{\tau_7=2, \delta_7}
(s_5, 12) \xrightarrow[]{\tau_8=2, \delta_8}
(s_{g2}, 14) \xrightarrow[]{\tau_9=2, \delta_9}
(s_5, 16) \xrightarrow[]{\tau_{10}=2, \delta_{10}}
(s_4, 18) \xrightarrow[]{\tau_{11}=2, \delta_{11}}
(s_3,20) \xrightarrow[]{\tau_{12}=2, \delta_{12}}
(s_2, 22) \xrightarrow[]{\tau_{13}=2, \delta_{13}}
(s_1, 24) \xrightarrow[]{\tau_{14}=2, \delta_{14}}
(s_{g1}, 26)]
^{\omega}
$
where $[\cdot]^{\omega}$ represents repeating the transitions $\omega$-times
(the robot starts in $s_2$; all these automaton states are marked in Fig. \ref{fig:automaton_states}).
We use \eqref{eq:timed_transition_reformulation} to find $k(t)$ per Algorithm \ref{algo:algo}.
The resulting sequence of timed transitions, for $\omega=2$, is shown in Fig. \ref{fig:sol_tightening} (top) with the constraints tightened using the steady-state covariance $P(\infty) = diag(0.0602, 0.125)$. The repeated transitions appear superimposed since they are almost identical. 
The trajectory starts at $(1,1)$, moves up to the first goal, then repetitively visits the second goal then the first goal.
We run 10000 Monte Carlo simulations.
The noise is sampled from a 3 degree of freedom, 0-mean, $\Sigma$-covariance, student-t distribution. 
The rollouts are plotted in the original environment in Fig. \ref{fig:sol_tightening} (bottom). The black and white circle in $g_1$ are reached at $t=3$ and $t=49$sec respectively.
The robot never collided with the obstacles, but violated the environment bounds in $0.3\% \ll \eta_{env} = 40\%$ of the cases. 

\section{Conclusion}
We present a risk-bounded controller for  continuous-time stochastic, non-Gaussian, linear system under STL specifications. In particular, we use RiSITL to specify constraints rooted in axiomatic risk theory and reformulate DR-VaR constraints into deterministic risk-tightened constraints.
Then, we consider timed transitions and tighten our constraints further to account for the discrete-time implementation of a sampled data system without loosing continuous-time guarantees.
From there, we show how these timed transitions can be used to verify the dynamically-feasibility of timed automaton transitions from a risk-based automaton.

\appendix

\subsection{SITL to Timed Signal Transducers}
\label{app:SITL}
An SITL formula $\phi$ can be translated into a language equivalent timed signal transducer, i.e., a timed automaton \cite{lindemann2020efficient}. We will need some of the machinery presented in \cite{lindemann2020efficient} despite working with RiSITL, and hence summarize the translation from SITL to timed signal transducer. The first step is to abstract the SITL formula $\phi$ into an MITL formula $\varphi$ (see Appendix \ref{app:MITL} for a description of MITL). 
We use $\phi(M)$ to make explicit that the SITL formula $\phi$ depends on the set of predicates $M$. 
We abstract the SITL formula $\phi(M)$ into an MITL formula $\varphi(P)$  essentially by replacing predicates $M$ in $\phi(M)$ by a set of propositions $P$.  
For $i\in\{1,\hdots,|M|\}$, associate with each $\mu_i\in M$  a proposition $p_i$ and let $P:=\{p_1,\hdots,p_{|M|}\}$. Let then $\varphi(P)=\phi(P)$. 

The translation from MITL to timed signal transducer mainly follows \cite{ferrere2019real}. Let ${c}\in\mathbb{R}_{\ge 0}^O$ be a vector denoting $O$ clocks. These clocks can be reset by the reset function  $r:\mathbb{R}_{\ge 0}^O\to \mathbb{R}_{\ge 0}^O$. Clocks evolve with time when visiting a state of a timed signal transducer, while clocks may be reset during transitions between states. We define clock constraints as Boolean combinations of conditions of the form $c_o\le k$ and $c_o\ge k$ for some $k\in\mathbb{Q}_{\ge 0}$. Let $\Phi({c})$ denote the set of all clock constraints over clock variables in ${c}$. 
\begin{definition}[Timed Signal Transducer \cite{ferrere2019real}]
	A timed signal transducer is a tuple $TST:=(\cals,s_0,\Lambda,\Gamma,{c},\iota,\Delta,\lambda,\gamma, \scra)$ where $\cals$ is a finite set of states, $s_0$ is the initial state with $s_0\cap \cals=\emptyset$, $\Lambda$ and $\Gamma$ are a finite sets of input and output variables, respectively, $\iota:\cals\to\Phi({c})$ assigns clock constraints over ${c}$ to each state, $\Delta$ is a transition relation so that $\delta=(s,g,r,s')\in\Delta$ indicates a transition from $s\in \cals\cup s_0$ to $s'\in \cals$ satisfying the guard constraint $g\subseteq \Phi({c})$ and resetting the clocks according to $r$; $\lambda:\cals\cup\Delta\to BC(\Lambda)$ and $\gamma:\cals\cup\Delta\to BC(\Gamma)$ are input and output labeling functions where $BC(\Lambda)$ and $BC(\Gamma)$ denote the sets of all Boolean combinations over $\Lambda$ and $\Gamma$, respectively, and $\scra\subseteq 2^{\cals\cup \Delta}$ is a generalized B\"uchi acceptance condition.
\end{definition} 

To construct a timed signal transducer $TST_\varphi$ that encodes the MITL formula $\varphi$, we follow the algorithm presented in \cite{ferrere2019real}. For a signal ${d}:\mathbb{R}_{\ge 0}\to\mathbb{B}^{|P|}$, it holds that  $({d},0)\models\varphi$ if and only if $d$ satisfies the generalized B\"uchi acceptance condition of $TST_\varphi$ (see \cite{ferrere2019real} for a definition). As the MITL formula $\varphi$ is an abstraction of the SITL formula $\phi$, we perform two operations on $TST_\varphi$ to obtain
the timed signal transducer 
$TST_\phi$ 
encoding
$\phi$, as presented in \cite{lindemann2020efficient}: 
\begin{enumerate}
    \item[{[O1]}] Remove any $s\in \cals$ if $\not\exists{x}\in\mathbb{R}^n$ so that ${x}\models \lambda(s)$. 
    Remove the corresponding $s$ from $\scra$. 
	\item[{[O2]}] Remove any $\delta:=(s,g,r,s')\in\Delta$ if $\not\exists{x}\in\mathbb{R}^n$ so that $x\models \lambda(\delta)$. Remove the corresponding $\delta$ from $\scra$.
\end{enumerate}

The modified $TST_\varphi$ is denoted by $TST_{\phi}:=(\cals^{\phi},s_{0},\Lambda,\Gamma,{c},\iota,\Delta^{\phi},\lambda,\gamma, \scra^{\phi})$.
For a signal ${x}:\mathbb{R}_{\ge 0}\to\mathbb{R}^n$, it holds that  $({x},0)\models\phi$ if and only if $x$ satisfies the generalized B\"uchi acceptance condition of $TST_\phi$

\subsection{Metric Interval Temporal Logic (MITL)}
\label{app:MITL}
While system specifications are given as SITL formulas, an intermediate step is needed when one wants to obtain a timed signal transducer that encodes the SITL formula. The difference between  SITL and MITL is that SITL considers predicates, while MITL considers propositions. Instead of $\mu(x)$, MITL only considers propositions $p$ where $p = \top$ if the proposition holds and $p = \bot$ if the proposition does not hold. Propositions are hence abstractions of predicates so that MITL can be seen as an abstraction of SITL. Let $P$ be a set of propositions. For $p\in P$, the MITL syntax is: $\varphi \; ::= \; \top \; | \; \p \; | \; \neg \varphi \; | \; \varphi_1 \wedge \varphi_2 \; | \; \varphi_1 \untilI{I} \varphi_2 \; $
with a similar interpretation of the operators as for SITL. Note that we use $\varphi$ to denote MITL formulas, while we use $\phi$ to denote SITL formulas. 

MITL semantics are  similar to SITL semantics. An MITL formula $\varphi$ is interpreted over a Boolean signal $d:\bbr_{\ge 0}\to \bbb^{\absval{P}}$ that corresponds to truth values of the propositions in $P$ over time. Define the projection of $d$ onto $p\in P$ as $proj_p(d):\mathbb{R}_{\ge 0}\to \mathbb{B}$. The only difference of the MITL semantics compared to the SITL semantics is now that instead of $(x,t)\models \mu$ iff $\alpha(x(t))\geq 0$, we have $(d,t)\models p$ iff $proj_p({d})(t)=\top$, while the other operators follows as in the SITL semantics \cite[Sec. 4]{ferrere2019real}. The expression $(d,t)\models \varphi$  indicates that $d$ satisfies the MITL formula $\varphi$ at time $t$. 

\bibliographystyle{IEEEtran}
\bibliography{references}

\end{document}